\newtheorem{theorem}{Theorem}
\newtheorem*{proof}{Proof}
\newtheorem{lemma}{Lemma}
\newtheorem{problem}{Problem}
\begin{document}


\title{Quantum Algorithm for Solving a Quadratic Nonlinear System of Equations}
\author{Cheng Xue}
\affiliation{CAS Key Laboratory of Quantum Information, University of Science and Technology of China, Hefei, Anhui, 230026, P. R. China}
\affiliation{CAS Center For Excellence in Quantum Information and Quantum Physics, University of Science and Technology of China, Hefei, Anhui, 230026, P. R. China}
\affiliation{Institute of Artificial Intelligence, Hefei Comprehensive National Science Center, Hefei, Anhui, 230026, P. R. China}
\author{Xiao-Fan Xu}
\affiliation{CAS Key Laboratory of Quantum Information, University of Science and Technology of China, Hefei, Anhui, 230026, P. R. China}
\affiliation{CAS Center For Excellence in Quantum Information and Quantum Physics, University of Science and Technology of China, Hefei, Anhui, 230026, P. R. China}
\author{Yu-Chun Wu}
 \email{wuyuchun@ustc.edu.cn}
 \affiliation{CAS Key Laboratory of Quantum Information, University of Science and Technology of China, Hefei, Anhui, 230026, P. R. China}
 \affiliation{CAS Center For Excellence in Quantum Information and Quantum Physics, University of Science and Technology of China, Hefei, Anhui, 230026, P. R. China}
 \affiliation{Hefei National Laboratory, Hefei, Anhui, 230088, P. R. China}
 \affiliation{Institute of Artificial Intelligence, Hefei Comprehensive National Science Center, Hefei, Anhui, 230026, P. R. China}
\author{Guo-Ping Guo}
 \affiliation{CAS Key Laboratory of Quantum Information, University of Science and Technology of China, Hefei, Anhui, 230026, P. R. China}
 \affiliation{CAS Center For Excellence in Quantum Information and Quantum Physics, University of Science and Technology of China, Hefei, Anhui, 230026, P. R. China}
  \affiliation{Hefei National Laboratory, Hefei, Anhui, 230088, P. R. China}
 \affiliation{Institute of Artificial Intelligence, Hefei Comprehensive National Science Center, Hefei, Anhui, 230026, P. R. China}
 \affiliation{Origin Quantum Computing Company Limited, Hefei, Anhui, 230026, P. R. China}

\begin{abstract}
  Solving a quadratic nonlinear system of equations (QNSE) is a fundamental, but important, task in nonlinear science.
  We propose an efficient quantum algorithm for solving $n$-dimensional QNSE.  Our algorithm embeds QNSE into a finite-dimensional system of linear equations using the homotopy perturbation method and a linearization technique; then we solve the linear equations with a quantum linear system solver and obtain a state which is $\epsilon$-close to the normalized exact solution of the QNSE with success probability $\Omega(1)$. The complexity of our algorithm is $O({\rm polylog}(n/\epsilon))$, which provides an exponential improvement over the optimal classical algorithm in dimension $n$, and the dependence on $\epsilon$ is almost optimal. Therefore, our algorithm exponentially accelerates the solution of QNSE and has wide applications in all kinds of nonlinear problems, contributing to the research progress of nonlinear science.

\end{abstract}

\maketitle


\section{Introduction}\label{sec1}

Nonlinear equations appear in many natural and social sciences, such as fluid dynamics  \cite{anderson1995computational}, biology \cite{hobbie2007intermediate}, atmospheric dynamics \cite{ghil2012topics}, and nonlinear vibration mechanics \cite{bishop2011mechanics}. By solving nonlinear equations, we understand various nonlinear phenomena, such as turbulence \cite{wilcox1998turbulence}, chaos \cite{alligood1996chaos}, and fractal \cite{s1992fractal}.
Most nonlinear equations have no or hardly solvable analytical solutions, such that many numerical methods have been developed \cite{dennis1996numerical}. When the dimension of the nonlinear equations is large, solving the nonlinear equations with classical computers requires too many computational resources and may exceed the ability of classical computers. There is great demand for developing more efficient algorithms for solving nonlinear equations. 

Quantum computing is a new model of computation which provides a quantum advantage in some specific problems \cite{shor1999polynomial,grover1996fast,harrow2009quantum}. 
A typical example is solving linear equations, where quantum computing provides exponential acceleration \cite{harrow2009quantum}.
There are already many quantum algorithms for solving various linear equations, such as systems of linear equations \cite{harrow2009quantum,childs2017quantum,subacsi2019quantum,xu2021variational} and linear differential equations \cite{clader2013preconditioned,berry2014highorder,Montanaro_2016,berry2017quantum,xin2020quantum,cao2013quantum,costa2019quantum,fillion2017quantum,engel2019quantum,arrazola2019quantum,linden2022quantum,childs2020quantumspectral,childs2020highprecision,nielsen_quantum_2002}. 
A natural idea is to use quantum computing to accelerate the solution of nonlinear equations. 
In recent years, some quantum algorithms for solving nonlinear differential equations have been proposed, such as nonlinear ordinary differential equations \cite{leyton2008quantum,lloyd2020quantum,liu2020efficient,kyriienko2021solving,xue2021quantumHomotopy,krovi2022improved,lin2022koopman,jin2022quantum}, the nonlinear Schr\"{o}dinger equation \cite{lubasch2020variational}, and Navier-stokes equations \cite{chen2022quantum,budinski2021quantum}. 

However, there are still few quantum algorithms for solving a system of nonlinear equations. 
A related algorithm proposed by Qian $et\ al$. \cite{qian2019quantum} is based on Grover's algorithm \cite{grover1996fast} and provides polynomial acceleration. 
Another related work is the quantum Newton's method proposed by Xue $et\ al$ \cite{xue2021quantum}. The quantum Newton's method is a quantum-classical hybrid algorithm constructed using quantum random access memory \cite{giovannetti2008quantum,giovannetti2008architectures,kerenidis2016quantum} and $l_{\infty}$ tomography \cite{Kerenidis2020Quantum}. Influenced by the sample complexity of $l_{\infty}$ tomography, the quantum advantage of the quantum Newton's method is verified only by numerical simulation.
Whether there are more effective quantum algorithms for solving a system of nonlinear equations requires further research.

In this paper, we focus on a special kind of system of nonlinear equations, the quadratic nonlinear system of equations (QNSE). QNSE appears in all kinds of nonlinear problems, such as quadratic programming \cite{mangasarian1994nonlinear}, nonlinear element analysis \cite{reddy2014introduction}, and nonlinear differential equations \cite{verhulst2006nonlinear}. In specific, QNSE often appears when solving quadratic nonlinear differential equations, including the Navier-Stokes equations in fluid dynamics \cite{anderson1995computational}, the logistic equation in biology \cite{hobbie2007intermediate}, and the Lorenz system in atmospheric dynamics \cite{ghil2012topics}. 
QNSE also appears when solving nonlinear differential equations in which the degree of nonlinear polynomials is higher than two because these differential equations can be approximate to quadratic nonlinear differential equations \cite{kerner1981universal}. Therefore, solving QNSE is a fundamental and important task, and the algorithm for accelerating the solution of QNSE has a wide range of applications.

We propose an effective quantum algorithm for solving $n$-dimensional QNSE.
In our algorithm, based on the homotopy perturbation method and a linearization technique, QNSE is embedded in a finite-dimensional system of linear equations. Then the condition number of the finite-dimensional system is optimized by splitting some subspaces of the finite-dimensional system.
Next, we solve the system of linear equations with a quantum linear system solver \cite{harrow2009quantum,childs2017quantum} and obtain a state which is $\epsilon$-close to the normalized exact solution of the QNSE with success probability $\Omega(1)$, where $\Omega$ represents an asymptotic notation \cite{cormen2022introduction}, which provides the asymptotic lower bound.
The complexity of our algorithm is $O({\rm polylog}(n/\epsilon))$, which provides an exponential improvement over the optimal classical algorithm in dimension $n$, and the dependence on $\epsilon$ is almost optimal. Our algorithm places some constraints on the QNSE; it is suitable when the linear component of the QNSE is well conditioned and is dominant in QNSE.

This paper is organized as follows. Sec. \ref{section-2} gives the definition of QNSE. The details of our algorithm are introduced in Sec. \ref{section-3}. Sec. \ref{section-4} gives the main result of our algorithm. Then we give some applications of our algorithm in Sec. \ref{section-5}. Finally, conclusions and discussions of our work are given in Sec. \ref{section-7}.

\section{Quadratic Nonlinear System of Equations}\label{section-2}

In this paper, QNSE is defined as
\begin{equation}\label{eq-0}
    F_0+F_1{\bm x}+F_2{\bm x^{\otimes 2}}=0,
\end{equation}
where $\bm{x}\in \mathbb{R}^n$ and $F_i\in \mathbb{R}^{n}\times \mathbb{R}^{n^i}$. We also have the following assumptions and definitions for Eq. (\ref{eq-0}):
\begin{itemize}
    \item [(1)] $F_1$ is invertible.
    \item [(2)] $F_1$ and $F_2$  are $s$-sparse.
    \item [(3)] Parameters $\alpha$, $\beta$, and $R$ are defined as
    \begin{equation}\label{eq-alpha}
        \begin{aligned}
            &\alpha:=\Vert F_1^{-1}\Vert \Vert F_0\Vert,\\
            &\beta:=\Vert F_1^{-1}\Vert \Vert F_2\Vert, \\
            &R:=\max\{4\alpha\beta, \Vert F_0\Vert\}.
        \end{aligned}
    \end{equation}
    In this paper, if not specifically noted otherwise, $\Vert\cdot\Vert=\Vert \cdot \Vert_2$.
    \item [(4)] Oracles $O_{F1}$ and $O_{F2}$ extract nonzero elements of $F_1$ and $F_2$, respectively. $O_{F1}$ consists of $O_{F11}$ and $O_{F12}$, and $O_{F2}$ consists of $O_{F21}$ and $O_{F22}$, which are written as 
    \begin{align}
        &O_{F11}|i\rangle|j\rangle=|i\rangle|f_1(i,j)\rangle,\\
        &O_{F12}|i\rangle|j\rangle|z\rangle=|i\rangle|j\rangle|z\oplus (F_1)_{i,j}\rangle,\\
        &O_{F21}|i\rangle|j\rangle=|i\rangle|f_2(i,j)\rangle,\\
        &O_{F22}|i\rangle|j\rangle|z\rangle=|i\rangle|j\rangle|z\oplus (F_2)_{i,j}\rangle,
    \end{align}
    where $f_1(i,j)$ and $f_2(i,j)$ represent the column index of the $j$th nonzero entry of the $i$th row of $F_1$ and $F_2$ respectively.
    \item [(5)] An oracle $O_{F0}$ is used to prepare the amplitude encoding of $F_0$, which is written as
    \begin{equation}
        O_{F0}\vert 0\rangle=\frac{1}{\Vert F_0 \Vert}\sum_{i=0}^{n-1}{F_{0,i}\vert i\rangle}.
    \end{equation} 
\end{itemize}

Formally, the problem to be solved is defined in Problem \ref{pro-1}.
\begin{problem}\label{pro-1}
    Consider QNSE defined in Eq. (\ref{eq-0}); $\bm{x}^{*}$ is an exact solution of Eq. (\ref{eq-0}), and $|x\rangle$ is the amplitude encoding of $\bm{x}^{*}$, which is written as 
    \begin{equation}
        |x\rangle=\frac{1}{\Vert \bm{x}^{*}\Vert}\sum_{i=0}^{n-1}{x^{*}_i|i\rangle}.
    \end{equation} 
    Given oracles $O_{F0}$, $O_{F1}$, and $O_{F2}$,  the goal is to output a state $|\bar{x}\rangle$ such that $\Vert |x\rangle-|\bar{x}\rangle \Vert \leq \epsilon$. 
\end{problem}

\section{Quantum Homotopy Perturbation Method}\label{section-3}

In this section, we introduce the overall process of our algorithm. The process contains three steps:
\begin{itemize}
    \item [(1)] Transform Eq. (\ref{eq-0}) into another kind of nonlinear equation with the homotopy perturbation method.
    \item [(2)] Embed the transformed nonlinear equations into a finite-dimensional system of linear equations, and solve the linear equations with a quantum linear system solver.
    \item [(3)] Measure some qubits of the output state of the quantum linear system solver and obtain the target state which represents a normalized approximate solution of Eq. (\ref{eq-0}).
\end{itemize}

The details of the whole process described above are introduced in the following three sections.

\subsection{Homotopy perturbation method}\label{set3-a}

The homotopy perturbation method is a classical method for solving nonlinear equations \cite{he1999homotopy,babolian2009some,chakraverty2019advanced}. The main process of the homotopy perturbation method for solving Eq. (\ref{eq-0}) is as follows.
We construct the homotopy $\nu(p):[0,1]\to \mathbb{R}^n$, which satisfies
\begin{equation}\label{nonlinear_eq}
    H(\nu,p)=F_0+F_1{\nu}+pF_2{\nu^{\otimes 2}}=0.
\end{equation}
With homotopy perturbation method, $\nu$ is written as
\begin{equation}\label{nu_exp}
    \nu=\nu_0+p\nu_1+p^2\nu_2+\dots +p^c\nu_c,
\end{equation}
where $c\in \mathbb{N}^{+}$.
Then substituting Eq. (\ref{nu_exp}) into Eq. (\ref{nonlinear_eq}) and equating the terms with identical powers of $p$, we have
\begin{equation}\label{linear_differential}
    \left\{
    \begin{aligned}
        &F_1\nu_i+F_0=0, &i=0,\\
        &F_1\nu_i+F_2\sum_{j=0}^{j=i-1}{\nu_{j}\otimes\nu_{i-1-j}}=0,& i=1,2,\dots,c.
    \end{aligned}\right.
\end{equation}
When $p=1$, $\nu$ is an approximate solution of Eq. (\ref{eq-0}), and we define
\begin{equation}\label{eq-26}
    \bm{\tilde{x}}=\nu(1)=\nu_0+\nu_1+\dots +\nu_c.
\end{equation}
The error bound of $\bm{\tilde{x}}$ is analyzed in Lemma \ref{error-threshold}. 

\begin{lemma}\label{error-threshold}
    Let $\bm{x}^{*}$ represent an exact solution of Eq. (\ref{eq-0}); 
    the approximate solution obtained with the homotopy perturbation method is $\tilde{\bm{x}}=\sum_{i=0}^{c}{\nu_i}$. When $R<1$ and $c\geq \log_{1/R}{\frac{\alpha}{\epsilon(1-R)}}$,
    where $\alpha$ and $R$ are defined in Eq. (\ref{eq-alpha}), $\tilde{\bm{x}}$ satisfies 
    \begin{equation}
        \Vert \bm{x}^{*}-\tilde{\bm{x}} \Vert\leq \epsilon.
    \end{equation}
\end{lemma}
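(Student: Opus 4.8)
The plan is to bound each homotopy coefficient $\nu_i$ in norm, recognize $\tilde{\bm{x}}$ as a partial sum of a series that converges to an exact solution of Eq.(\ref{eq-0}), and then estimate the tail of that series. The hypotheses $R<1$ and the stated lower bound on $c$ should emerge naturally as the convergence threshold and the truncation level that forces the tail below $\epsilon$.

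First I would convert the recurrence Eq.(\ref{linear_differential}) into a scalar majorizing recurrence for the norms. Since $\nu_0=-F_1^{-1}F_0$ we have $\|\nu_0\|\le\alpha$, and for $i\ge1$ the relation $\nu_i=-F_1^{-1}F_2\sum_{j=0}^{i-1}\nu_j\otimes\nu_{i-1-j}$ gives $\|\nu_i\|\le\beta\sum_{j=0}^{i-1}\|\nu_j\|\,\|\nu_{i-1-j}\|$, using $\|\nu_j\otimes\nu_{i-1-j}\|=\|\nu_j\|\,\|\nu_{i-1-j}\|$. Defining $a_0=\alpha$ and $a_i=\beta\sum_{j=0}^{i-1}a_ja_{i-1-j}$, a straightforward induction yields $\|\nu_i\|\le a_i$. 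The generating function $A(t)=\sum_i a_it^i$ then satisfies the quadratic $A=\alpha+\beta tA^2$, whose branch with $A(0)=\alpha$ is $A(t)=\bigl(1-\sqrt{1-4\alpha\beta t}\bigr)/(2\beta t)=\alpha\sum_i C_i(\alpha\beta t)^i$, with $C_i$ the Catalan numbers. Using $C_i\le 4^i$ together with $R=4\alpha\beta$ gives the key estimate $\|\nu_i\|\le a_i=\alpha C_i(\alpha\beta)^i\le\alpha R^i$.

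Next I would identify the exact solution with the homotopy limit. Because $R<1$, the bound $\|\nu_i\|\le\alpha R^i$ makes $\sum_{i\ge0}\nu_i$ absolutely convergent, so $\bm{x}^*:=\sum_{i=0}^{\infty}\nu_i$ is well defined and its tensor square is recovered as an absolutely convergent limit. Setting $p=1$ in the expansion $\nu(p)=\sum_i p^i\nu_i$ of Eq.(\ref{nu_exp}), and recalling that Eq.(\ref{linear_differential}) is exactly the order-by-order vanishing of $H(\nu,p)=F_0+F_1\nu+pF_2\nu^{\otimes2}$ in Eq.(\ref{nonlinear_eq}), I conclude $F_0+F_1\bm{x}^*+F_2(\bm{x}^*)^{\otimes2}=0$, so $\bm{x}^*$ solves Eq.(\ref{eq-0}).

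Finally I would bound the truncation error. Since $\tilde{\bm{x}}=\sum_{i=0}^{c}\nu_i$,
\begin{equation}
    \|\bm{x}^*-\tilde{\bm{x}}\|\le\sum_{i=c+1}^{\infty}\|\nu_i\|\le\alpha\sum_{i=c+1}^{\infty}R^i=\frac{\alpha R^{c+1}}{1-R}\le\frac{\alpha R^{c}}{1-R}.
\end{equation}
Forcing the last expression to be at most $\epsilon$ gives $R^c\le\epsilon(1-R)/\alpha$, equivalently $c\ge\log_{1/R}\frac{\alpha}{\epsilon(1-R)}$, which is precisely the stated hypothesis. I expect the main obstacle to be the second paragraph: the passage from the quadratic convolution recurrence to $\|\nu_i\|\le\alpha R^i$. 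The generating-function/Catalan argument is the clean route, since it exhibits both that $1/R$ is the radius of convergence (so $R<1$ is the natural threshold) and that the factor $C_i\le4^i$ converts $(\alpha\beta)^i$ into $R^i=(4\alpha\beta)^i$. A secondary point needing care is justifying the interchange of the infinite sum with the bilinear map $\nu\mapsto F_2\nu^{\otimes2}$ when passing to the limit $\bm{x}^*$, which the absolute convergence guaranteed by $R<1$ supplies.
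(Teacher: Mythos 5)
Your proposal is correct and follows essentially the same route as the paper's own proof: a Catalan-number bound on the convolution recurrence for $\Vert \nu_i\Vert$ yielding $\Vert \nu_i\Vert\leq \alpha R^i$, followed by a geometric tail estimate. The only differences are cosmetic refinements — you derive the Catalan bound via the generating function $A=\alpha+\beta tA^2$ rather than the paper's direct inductive ansatz $\Vert\nu_i\Vert\leq\gamma_i\beta^i\alpha^{i+1}$, and you explicitly justify that the series $\sum_{i=0}^{\infty}\nu_i$ converges to an exact solution, a step the paper simply asserts.
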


The proof of Lemma \ref{error-threshold} is given in Appendix \ref{sec-lemma-error}. From Lemma \ref{error-threshold} we see that when $R<1$, Eq. (\ref{eq-26}) is convergent; the solution error $\epsilon$ decreases exponentially with $c$.

\subsection{Linear embedding}\label{linear-embed}

Then we embed Eq. (\ref{linear_differential}) into a finite-dimensional system of linear equations
\begin{equation}\label{eq-linear}
    A\bm{y}=\bm{b},
\end{equation}
where $\bm{y}=[\bm{y}_0,\bm{y}_1,\dots ,\bm{y}_c]$. The details of $\bm{y}$, $A$, and $\bm{b}$ are explained as follows.

First, ${\bm y}_0$ is defined as
\begin{equation}
    {\bm y}_0=[\nu_0+\nu_1+\dots + \nu_c],
\end{equation}
which means ${\bm y}_0$ contains an $n$-dimensional vector and ${\bm y}_{0,0}$ represents the approximate solution $\tilde{\bm x}$.

Then we substitute ${\bm y}_0$ into Eq. (\ref{linear_differential}) and get
\begin{equation}
    F_1{\bm y}_{0,0}+F_2\sum_{i=1}^{c}{\sum_{j=0}^{i-1}{\nu_j\otimes \nu_{i-1-j}}}=-F_0.
\end{equation}
We consider $\nu_j\otimes \nu_{i-1-j}$ to be an independent element and define it as a component of ${\bm y}_1$. Then ${\bm y}_1$ contains $c(c+1)/2$ $n^2$-dimensional vectors, which is written as 
\begin{equation}
    {\bm y}_1=[\nu_0\otimes \nu_0,\nu_0\otimes \nu_1,\nu_1\otimes \nu_0,\dots,\nu_{c-1}\otimes \nu_0].
\end{equation}
${\bm y}_2$ is generated from ${\bm y}_1$ in a similar way. In general, ${\bm y}_{i}$ is generated from ${\bm y}_{i-1}$. Repeating this process, we have ${\bm y}_c=[\nu_0^{\otimes{c+1}}]$, and ${\bm y}_{c}$ satisfies
\begin{equation}
    F_1^{\otimes c+1}{\bm y}_{c,0}=(-F_0)^{\otimes c+1},
\end{equation}
so ${\bm y}_c$ does not generate new elements. In summary, ${\bm y}_{i}$ can be written as
\begin{equation}
    \bm{y}_i=\left\{\begin{array}{lll}
            &[\nu_0+\nu_1+\dots + \nu_c], & i=0,\\ 
            &[\bm{y}_{i,0},\bm{y}_{i,1},\dots ,\bm{y}_{i,\beta_i-1}], & 1\leq i \leq c.
            \end{array}\right.
\end{equation}
$\beta_i$ denotes the number of terms in $\bm{y}_i$, and $\bm{y}_{i,j}$ represents the $j$th item of $\bm{y}_i$, which is written as $\bm{y}_{i,j}=\otimes_{k=0}^{i}{\nu_{a_{i,j,k}}}$; $a_{i,j,k}$ satisfies
\begin{equation}\label{def-aijk}
    \left\{\begin{array}{l}
        a_{i,j,k}\geq 0, \\
        i+1\leq\sum_{k=0}^{i}{(a_{i,j,k}+1)}\leq c+1.
        \end{array}\right.  
\end{equation}
By Eq. (\ref{def-aijk}), $\beta_i$ satisfies
\begin{equation}
    \beta_i=\left\{\begin{array}{ll}
        1, & i=0, \\
        \sum_{k=i}^{c}{\binom{k}{i}}, & 1\leq i \leq c.
        \end{array}\right.  
\end{equation}
We define 
\begin{equation}\label{eq-aij}
    \vec{a}_{i,j}=[a_{i,j,0},a_{i,j,1},\dots ,a_{i,j,i}].
\end{equation}
The mapping $(i,j)\to \vec{a}_{i,j}$ is a one-to-one mapping, the time complexity to compute this mapping or its reverse is $O({\rm poly}(c))$.

Next, we discuss the structure of matrix $A$ and vector ${\bm b}$. We set $\vec{a}_{i,0}=[0,0,\cdots,0]$; then $\bm{y}_{i,0}=\nu_0^{\otimes i+1}$ and satisfies
\begin{equation}\label{eq4}
    F_1^{\otimes i+1}\bm{y}_{i,0}=(-F_0)^{\otimes i+1}.
\end{equation} 
When $j\geq 1$, $\vec{a}_{i,j}$ has nonzero elements, and we assume the first nonzero element is $a_{i,j,k}$; then we have
\begin{align}\label{eq3}
    &I_n^{\otimes k}\otimes F_1\otimes I_n^{\otimes i-k}\bm{y}_{i,j}+\nu_{a_{i,j,0}}\otimes\dots\otimes \nu_{a_{i,j,k-1}} \notag\\
    &\quad \otimes F_2(\sum_{l=0}^{a_{i,j,k}-1}{\nu_{l}\otimes \nu_{a_{i,j,k}-1-l}})\otimes \dots \otimes \nu_{a_{i,j,i}}=0,
\end{align}
where $\nu_{a_{i,j,0}}\otimes\dots\otimes \nu_l\otimes\nu_{a_{i,j,k}-1-l}\otimes\dots\otimes \nu_{a_{i,j,i}}\in \bm{y}_{i+1}$. Therefore, Eq. (\ref{eq-linear}) can be expanded in the following form:
\begin{equation}\label{eq-14}
    \left(\begin{array}{cccc}
      A_{0,0} &A_{0,1} & &   \\
      & A_{1,1}  &\ddots &  \\
      & & \ddots &A_{c-1,c}  \\
      & & & A_{c,c}
      \end{array}\right)\left(\begin{array}{c}
          \bm{y}_{0} \\
          \bm{y}_{1} \\
          \vdots \\
          \bm{y}_{c}
          \end{array}\right)=
          \left(\begin{array}{c}
            \bm{b}_{0} \\
            \bm{b}_{1} \\
            \vdots \\
            \bm{b}_{c}
            \end{array}\right),
\end{equation}
where $A_{i,i}$ is an ($n^{i+1}\beta_i$)-dimensional square matrix and $A_{i,i+1}$ is an ($n^{i+1}\beta_i\times n^{ i+2}\beta_{i+1}$) dimensional matrix. The elements of $A_{i,i}$ and $A_{i,i+1}$ are determined by Eqs. (\ref{eq4}) and (\ref{eq3}). With Eqs. (\ref{eq4}) and (\ref{eq3}), the expression of ${\bm b}$ can also be obtained; in detail, the $i$th component of $\bm{b}$ is 
\begin{equation}\label{eq-25-0825}
    \bm{b}_i=[(-F_0)^{\otimes i+1},\bm{0},\dots,\bm{0}].
\end{equation}

From Eq. (\ref{eq4}), matrix $A$ contains the block matrix $F_1^{\otimes i},i=1,2, \cdots ,c+1$, which causes the condition number $\kappa_A$ of $A$ to increase exponentially with $c$. We optimize $\kappa_A$ by splitting $F_1^{\otimes i+1}\bm{y}_{i,0}=(-F_0)^{\otimes i+1}$ in Eq. (\ref{eq-14}) into 
\begin{equation}\label{eq-6}
    \left(\begin{array}{ccccc}
        B_{i,0} &I & & \\
        & B_{i,1} &\ddots & \\
        & & \ddots &I \\
        &  & & B_{i,i}
        \end{array}\right)\left(\begin{array}{c}
            \nu_0^{\otimes i+1} \\
            F_0\otimes \nu_0^{\otimes i} \\
            \vdots \\
            F_0^{\otimes i}\otimes \nu_0
            \end{array}\right)=
            \left(\begin{array}{c}
                \bm{0} \\
                \bm{0} \\
                \vdots \\
                -F_0^{\otimes i+1}
                \end{array}\right),
\end{equation}
where $B_{i,j}=I_n^{\otimes j}\otimes F_1 \otimes I_n^{\otimes i-j}$. As a result, $\bm{y}_{i,0}$ is redefined as
\begin{equation}\label{eq-6-1}
    \bm{y}_{i,0}=[\nu_0^{\otimes i+1},F_0\otimes \nu_0^{\otimes i},\dots ,F_0^{\otimes i}\otimes \nu_0],
\end{equation}
and the linear system $A\bm{y}=\bm{b}$ defined in Eq. (\ref{eq-linear}) is adjusted accordingly. In later sections, Eq. (\ref{eq-linear}) is defaulted to the adjusted linear system. The dimension of the linear system is
\begin{align}\label{dim-N}
    N&=\sum_{i=0}^{c}{n^{i+1}(\beta_i+i)}\notag\\
    &=(n+1)^{c+1}-1-sn+\frac{cn^{c}(n-1)-n^c+1}{(n-1)^2}n^2\notag\\
    &\approx (n+1)^{c+1}+cn^{c+1}.
\end{align}

Next, we solve Eq. (\ref{eq-linear}) with the quantum linear system solver proposed in  \cite{childs2017quantum}; operations $O_A$ and $O_b$ are required. $O_A$ consists of $O_{A1}$ and $O_{A2}$, which are defined as 
\begin{equation}\label{eq-oa}
    \begin{array}{c}
        O_{A1}|i\rangle|j\rangle=|i\rangle|f_a(i,j)\rangle,\\
        O_{A2}|i\rangle|j\rangle|z\rangle=|i\rangle|j\rangle|z\oplus A_{i,j}\rangle,
    \end{array}
\end{equation}
where $f_a(i,j)$ represents the column index of the $j$th nonzero entry of the $i$th row of $A$. $O_b$ is used to prepare the amplitude encoding of $\bm{b}$, which is defined as 
\begin{equation}\label{eq-ob}
    O_b|0\rangle=|b\rangle:=\frac{1}{\Vert \bm{b}\Vert}\sum_{i=0}^{c}{\sum_{j=0}^{\beta_i-1}{\Vert b_{i,j}\Vert|i,j\rangle|b_{i,j}\rangle}},
\end{equation}
where $|b_{i,j}\rangle$ is the amplitude encoding of $\bm{b}_{i,j}$; using Eqs. (\ref{eq-25-0825}) and (\ref{eq-6}), $b_{i,j}$ is written as 
\begin{equation}\label{eq-bij}
    \bm{b}_{i,j}=\left\{
        \begin{array}{cc}
            [\bm{0},\bm{0},\dots,-F_0^{\otimes i+1}], & 0\leq i\leq c, j=0,\\
            \bm{0}, & 0\leq i\leq c, 1\leq j\leq \beta_i-1.
        \end{array}
    \right.
\end{equation}
$O_A$ is constructed by querying $O_{F1}$ and $O_{F2}$, and $O_b$ is constructed by querying $O_{F0}$; the query complexity is given in Lemma \ref{oracle-construction}, and the proof of Lemma \ref{oracle-construction} is given in Appendix \ref{sec-A-oracle-construction}.

\begin{lemma}\label{oracle-construction}
    The operations $O_{A1}$ and $O_{A2}$ defined in Eq. (\ref{eq-oa}) can be constructed by querying $O_{F1}$ and $O_{F2}$ $O({\rm poly}(c))$ times; $O_b$ defined in Eq. (\ref{eq-ob}) can be constructed by querying $O_{F0}$ $O({\rm poly}(c))$ times.
\end{lemma}

After running the quantum linear system solver, we obtain the output state $\vert \bm{y}\rangle$, which is written as
\begin{equation}\label{def-vec-y}
    \vert \bm{y}\rangle=\sum_{i=0}^{c}{\sum_{j=0}^{\beta_i-1}{\vert i,j\rangle\vert \bm{y}_{i,j}\rangle}}.
\end{equation}

The query complexity of the quantum linear system solver is $O(s_A\kappa_A {\rm polylog}(s\kappa/\epsilon))$, and the sparsity of matrix $A$ is 
\begin{equation}
    s_A=\frac{c(c+1)}{2}s.
\end{equation}

$\kappa_A$ is decided by $c$, $F_1$, and $F_2$. As shown in Lemma \ref{A-condition}, when $c$, $F_1$, and $F_2$ satisfy some conditions, an upper bound of $\kappa_A$ is derived.

\begin{lemma}\label{A-condition}
    When $c$, $F_1$, and  $F_2$ satisfy
    \begin{equation}
        \Vert F_1^{-1} \Vert(1+(c+1)\Vert F_2 \Vert)<1,
    \end{equation}
    the condition number $\kappa_A $ of matrix $A$ satisfies 
    \begin{equation}
        \kappa_A \leq \frac{\kappa_{F_1} +1}{1-\Vert F_1^{-1} \Vert[1+(c+1)\Vert F_2 \Vert]},
    \end{equation}
    where $\kappa_{F_1}$ represents the condition number of $F_1$.
\end{lemma}

The proof of Lemma \ref{A-condition} is given in Appendix \ref{sec-A-condition}. From Lemma \ref{A-condition}, $\kappa_A$ does not increase exponentially with $c$ and mainly depends on $\kappa_{F_1}$.

\subsection{Measurement}

Finally, by measuring the first qubit register of $\vert \bm{y}\rangle$ to $\vert 0,0\rangle$, we obtain the target state $\vert \bm{\tilde{x}}\rangle=\vert \bm{y}_{0,0}\rangle$, a normalized approximate solution of Eq. (\ref{eq-0}).
This step is probabilistic; the success probability is
\begin{equation}\label{eq-prob}
    p=\frac{\Vert \bm{y}_{0,0} \Vert^2}{\Vert \bm{y} \Vert^2}.
\end{equation}
A lower bound of $p$ is given in Lemma \ref{success-theo}.

\begin{lemma}\label{success-theo}
    Consider the linear system $A\bm{y}=\bm{b}$ defined in Eq. (\ref{eq-linear}). Let $\eta^{'} =\Vert \bm{y}_{0,0} \Vert/R$. When $\Vert F_1^{-1}\Vert<1$ and $R < \sqrt{2}/2$, $p$ defined in Eq. (\ref{eq-prob}) satisfies 
    \begin{equation}
        p\geq \frac{(\eta^{'})^2(1-2R^2)}{(\eta^{'})^2(1-2R^2)+2}.
    \end{equation}
\end{lemma}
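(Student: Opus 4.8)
The plan is to turn the stated probability bound into a single clean inequality on $\|\bm{y}\|^2$ and then estimate that norm block by block. Since $\eta' = \|\bm{y}_{0,0}\|/R$ gives $\|\bm{y}_{0,0}\|^2 = (\eta')^2 R^2$, substituting this into the target and clearing denominators shows the claim is equivalent to $\|\bm{y}\|^2 \le \|\bm{y}_{0,0}\|^2 + \frac{4R^2}{1-2R^2}$. Because $\|\bm{y}\|^2 = \|\bm{y}_{0,0}\|^2 + \sum_{i=1}^c \|\bm{y}_i\|^2$, the entire task reduces to showing that the tail $\sum_{i=1}^c\|\bm{y}_i\|^2$ is at most $\frac{4R^2}{1-2R^2}$.

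Next I would bound every block individually. For $i\ge 1$ each genuine component is a tensor product $\bm{y}_{i,j} = \nu_{a_{i,j,0}}\otimes\cdots\otimes\nu_{a_{i,j,i}}$, so $\|\bm{y}_{i,j}\|^2 = \prod_k \|\nu_{a_{i,j,k}}\|^2$. Feeding in the estimate $\|\nu_m\|\le \alpha R^m$ from Lemma \ref{error-threshold} together with $\alpha = \|F_1^{-1}\|\,\|F_0\| \le R$ (valid since $\|F_1^{-1}\|<1$ and $R\ge\|F_0\|$), each factor satisfies $\|\nu_m\|^2 \le R^{2(m+1)}$, hence $\|\bm{y}_{i,j}\|^2 \le R^{2L}$ with weight $L:=\sum_k (a_{i,j,k}+1)$. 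The combinatorial heart of the argument is that the number of genuine blocks having $i+1$ factors and weight $L$ equals the number of compositions of $L$ into $i+1$ positive parts, namely $\binom{L-1}{i}$, and that summing over the factor count gives $\sum_{i\ge 1}\binom{L-1}{i} = 2^{L-1}-1$.

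Organising the double sum by weight then collapses it to a geometric series: the genuine blocks contribute at most $\sum_{L\ge 2}(2^{L-1}-1)R^{2L} < \frac12\sum_{L\ge 2}(2R^2)^L = \frac{2R^4}{1-2R^2}$, and this is precisely where the hypothesis $R<\sqrt{2}/2$ is needed, since it is exactly the condition $2R^2<1$ that makes the series converge. I would then account separately for the padding terms $F_0^{\otimes m}\otimes\nu_0^{\otimes i+1-m}$ with $1\le m\le i$ introduced when $\bm{y}_{i,0}$ was redefined in Eq.(\ref{eq-6-1}); each has squared norm at most $R^{2(i+1)}$ (using $\|F_0\|\le R$ and $\|\nu_0\|\le\alpha\le R$), and there are $i$ of them, so they add at most $\sum_{i\ge 1} i R^{2(i+1)} = \frac{R^4}{(1-R^2)^2}$. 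Using $R^2\le 1/2$ to bound $\frac{2R^4}{1-2R^2}\le\frac{2R^2}{1-2R^2}$ and $\frac{R^4}{(1-R^2)^2}\le 4R^4\le 2R^2\le \frac{2R^2}{1-2R^2}$ yields tail $\le \frac{4R^2}{1-2R^2}$, closing the argument.

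I expect the principal obstacle to be the combinatorial bookkeeping rather than any analytic difficulty: one must enumerate the blocks $\bm{y}_{i,j}$ by weight consistently with $\beta_i=\sum_{k=i}^c\binom{k}{i}$, correctly isolate the all-$\nu_0$ block so the extra padding terms are counted once and only once, and verify that extending the finite sums to $L=\infty$ and $i=\infty$ is a legitimate (term-by-term nonnegative) over-estimate. Once the blocks are grouped by weight the remaining inequalities are elementary, and the role of the threshold $R<\sqrt{2}/2$ becomes transparent as the radius of convergence of the dominant series.
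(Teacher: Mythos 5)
Your proof is correct and follows essentially the same route as the paper's: both decompose the non-target part of $\bm{y}$ into genuine tensor blocks grouped by total weight (your count $2^{L-1}-1$ via compositions is exactly the paper's $2^i-1$ with $L=i+1$) plus the padding terms $F_0^{\otimes j}\otimes\nu_0^{\otimes i-j}$ from Eq.(\ref{eq-6-1}), bound each factor via $\Vert\nu_m\Vert\le\alpha R^m$ with $\alpha\le R$, $\Vert F_0\Vert\le R$, and sum geometric series (convergent precisely because $2R^2<1$) to obtain the same two partial bounds $\frac{2R^4}{1-2R^2}$ and $\frac{R^4}{(1-R^2)^2}$. The only cosmetic difference is that you first recast the target as the equivalent inequality $\Vert\bm{y}\Vert^2\le\Vert\bm{y}_{0,0}\Vert^2+\frac{4R^2}{1-2R^2}$, whereas the paper substitutes its partial bounds directly into the expression for $p$ and simplifies the resulting fraction.
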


The proof of Lemma \ref{success-theo} is given in Appendix \ref{sec-lemma-success}. 
With Lemma \ref{success-theo}, $p$ mainly depends on $\eta^{'}$ and $1-2R^2$.

\section{Main Result}\label{section-4}

In this section, the main result of our work is given in Theorem \ref{main-theo}.

\begin{theorem}\label{main-theo}
    Given the QNSE $F_0+F_1\bm{x}+F_2\bm{x}^{\otimes 2}=0$ defined in Sec. \ref{section-2} with an exact solution $\bm{x}^{*}$, let $\eta=\Vert \bm{x}^{*}\Vert/R$, $c=\lceil\log_{\frac{1}{R}}{\frac{4\alpha}{\eta R\epsilon(1-R)}}\rceil$, and $G=\Vert F_1^{-1} \Vert[1+(c+1)\Vert F_2 \Vert]$,
    where $\alpha$ and $R$ are defined in Eq. (\ref{eq-alpha}).
    When
    \begin{equation}\label{eq-0407-02}
        \begin{array}{l}
            G<1, \\
            R <\sqrt{2}/2,
            \end{array} 
    \end{equation}
    there exists a quantum algorithm with success probability $\Omega(1)$ to obtain a normalized quantum state $\vert \bm{\bar{x}}\rangle$ satisfying $\Vert \bm{\bar{x}}-\frac{\bm{x}^{*}}{\Vert \bm{x}^{*}\Vert}\Vert \leq \epsilon $. The query complexity of the algorithm for the oracles of $F_0$, $F_1$, and $F_2$ is
    \begin{equation}
        O\left(\frac{\kappa_{F_1}  s{\rm polylog}\left(\frac{\Vert F_1\Vert}{\epsilon\eta (1-G)(1-2R^2)\Vert F_2 \Vert}\right)}{\eta(1-G)\sqrt{1-2R^2}}  \right).
    \end{equation}
    The gate complexity is the query complexity multiplied by a factor of ${\rm polylog}\left(\frac{n\Vert F_1\Vert}{\epsilon\eta (1-G)(1-2R^2)\Vert F_2 \Vert} \right)$.
\end{theorem}

The proof of Theorem \ref{main-theo} is given in Appendix \ref{sec-main-proof}. Here we give some discussion of Theorem \ref{main-theo}. The dependence on $n$ and $\epsilon$ of our algorithm is $O({\rm polylog}(n/\epsilon))$. Compared with the classical algorithm, our algorithm provides exponential acceleration. 

Exponential acceleration comes at the cost of stronger constraints on QNSE, which are listed in Eq. (\ref{eq-0407-02}). 
The role of $G<1$ is to limit the condition number of matrix $A$ so that it does not increase exponentially with $c$. $G$ can be written as 
\begin{equation}\label{eq-G-0418}
    G=\kappa_{F_1}\frac{1+(c+1)\Vert F_2\Vert}{\Vert F_1\Vert}.
\end{equation}
The other constraint $R <\sqrt{2}/2$ is used to bound the success probability of our algorithm. $R <\sqrt{2}/2$ also implies $R<1$, which is the convergence condition of our algorithm. $R$ mainly depends on $4\alpha\beta$ because when $\Vert F_0\Vert \geq \sqrt{2}/2$, $x$ can be rescaled to $\zeta x$ with a suitable constant $\zeta$ which keeps $4\alpha\beta$ unchanged and makes $\Vert F_0\Vert < \sqrt{2}/2$. Notice that 
\begin{equation}\label{eq-R-0418}
    4\alpha\beta=4\Vert F_1^{-1}\Vert^2\Vert F_2\Vert \Vert F_0\Vert=4\kappa_{F_1}^2\frac{\Vert F_0\Vert}{\Vert F_1\Vert}\frac{\Vert F_2\Vert }{\Vert F_1 \Vert}.
\end{equation}
From Eqs. (\ref{eq-G-0418}) and (\ref{eq-R-0418}), $G$ and $R$ measure the condition number $\kappa_{F_1}$ and the dominance of the linear component $F_1$ in QNSE from different aspects. 
When the linear component $F_1$ is well conditioned and is dominant in QNSE, $G$ and $R$ are small, and our algorithm is efficient. ``Well conditioned" means the condition number $\kappa_{F_1}$ is not large. ``Dominant" means that the strength of the linear component $\Vert F_1\Vert$ in QNSE is much larger than that of other components, i.e., $\Vert F_1 \Vert^2>>\Vert F_0\Vert\Vert F_2\Vert$ and  $\Vert F_1 \Vert>>1+(c+1)\Vert F_2\Vert$.

\section{Application}\label{section-5}

In this section we give two applications of our algorithm. 

The first application is a two-dimensional QNSE, which is defined as
\begin{equation}\label{obm12}
    \left\{\begin{array}{ll}
        8x_0-x_1-0.5x_0^2+0.5x_0x_1+0.2=0, \\
        -x_0+8x_1-0.5x_1^2+0.5x_1x_0-0.2=0.
        \end{array}\right.  
\end{equation}
The corresponding $F_0,F_1$, and $F_2$ are written as
\begin{align}
    &F_0=\left(\begin{array}{c}
        0.2 \\
        -0.2
        \end{array}\right),
    F_1=\left(\begin{array}{cc}
        8&-1 \\
        -1&8
        \end{array}\right),\notag\\
    &F_2=\left(\begin{array}{cccc}
        -0.5&0.5&0&0 \\
        0&0&0.5&-0.5
        \end{array}\right).
\end{align}

We set $c=2$ and compute the parameters $G$ and $R$,
\begin{align}
    & G\approx 7.24\times 10^{-1}<1,\notag\\
    & R\approx 2.82\times 10^{-1}<\sqrt{2}/2.
\end{align}
Then $\bm{y}=[\bm{y}_0,\bm{y}_1,\bm{y}_2]$, and each component of $\bm{y}$ is
\begin{align}
    &\bm{y}_0=[\nu_0+\nu_1+\nu_2],\notag\\
    &\bm{y}_1=[\nu_0\otimes\nu_0,F_0\otimes\nu_0,\nu_0\otimes\nu_1,\nu_1\otimes \nu_0],\notag\\
    &\bm{y}_2=[\nu_0^{\otimes 3},F_0\otimes \nu_0\otimes \nu_0,F_0\otimes F_0\otimes \nu_0].
\end{align}
The corresponding matrix $A$ and vector $\bm{b}$ are
\begin{widetext}
  \begin{eqnarray}
    A=\left(\begin{array}{cccccccc}
      F_1&F_2& &F_2&F_2 & & & \\
      &F_1\otimes I_2& I_4& & & & & \\
      & &I_2\otimes F_1 & & & & &\\
      & & &I_2\otimes F_1& &I_2\otimes F_2 & & \\
      & & & &F_1\otimes I_2& F_2\otimes I_2& & \\
      & & & & &F_1\otimes I_4& I_8& \\
      & & & & & &I_2\otimes F_1\otimes I_2 & I_8\\
      & & & & & & & I_4\otimes F_1
      \end{array}\right),
  \end{eqnarray}
\end{widetext}
\begin{equation}
    \bm{b}=[-F_0,\bm{0}_4,-F_0^{\otimes 2},\bm{0}_4,\bm{0}_4,\bm{0}_8,\bm{0}_8,-F_0^{\otimes 3}]^T,
\end{equation}
where $\bm{0}_4=[0,0,0,0]$ and $\bm{0}_8$ is similar. Then we solve $A\bm{y}=\bm{b}$ and obtain the component $\bm{y}_0$,
\begin{equation}
    \bm{\tilde{x}}=\bm{y}_0=[-2.2151849674\times 10^{-2},2.2292943149\times 10^{-2}].
\end{equation}
A numerical solution $\bm{x}^{*}$ of Eq. (\ref{obm12}) is obtained with the classical algorithm, which is written as 
\begin{equation}
    \bm{x}^{*}=[-2.2151848573\times 10^{-2}, 2.2292944259\times 10^{-2}].
\end{equation}
Then $\bm{\tilde{x}}$ satisfies
\begin{equation}
    \Vert \bm{x}^{*}-\bm{\tilde{x}} \Vert \approx 1.56\times 10^{-9}.
\end{equation}

The second application is a nonlinear boundary problem, which is defined as
\begin{equation}
    u_{xx}-u^2-\delta x^2=0, u(0)=0,\ u(1)=0,
\end{equation}
where $\delta=5\times 10^{-4}$. Then we discretize the equation in the following way:
\begin{equation}
    x_i=(i+1)h, i=0,1,\dots,n-1, h=1/(n+1).
\end{equation}
We define $u_i=u(x_i)$, and $u_{i,xx}$ is written as 
\begin{equation}
    u_{i,xx}=\frac{u_{i+1}-2u_i+u_{i-1}}{2h^2}.
\end{equation}
Then we have the QNSE 
\begin{align}\label{qnse-example3}
    &-u_{i+1}+2u_i-u_{i-1}+2h^2u_i^2+2h^2\delta x_i^2=0,\notag\\
    &i=0,1,\dots,n-1,
\end{align}
where $u_{-1}=1$ and $u_n=0$. Equation (\ref{qnse-example3}) can be represented as 
\begin{equation}\label{eq-0412-01}
    F_0+F_1{\bm u}+F_2{\bm u}^{\otimes 2}=0,
\end{equation}
where
\begin{equation}\label{eq-0401-01}
    F_0=2\delta h^2[x_0^2,x_1^2,\dots,x_{n-1}^2]^T,
\end{equation}
\begin{equation}
    F_1=\left(\begin{array}{cccc}
        2&-1& & \\
        -1&2&\ddots & \\
        &\ddots&\ddots & -1\\
        & & -1&2 
        \end{array}\right),
\end{equation}
\begin{equation}
    (F_2)_{i,j}=\left\{
        \begin{array}{cc}
            2h^2, & j=ni, i=0,1,\dots,n-1, \\
            0, & j\neq ni, i=0,1,\dots,n-1.
        \end{array}
    \right.
\end{equation}

We set $n=100$, $c=2$, and $\zeta=1200$ and rescale ${\bm u}$ to ${\bm w}=\zeta {\bm u}$; ${\bm \omega}$ satisfies 
\begin{equation}\label{eq-0407-v1}
    \zeta^2F_0+\zeta F_1 {\bm w}+F_2{\bm w}^{\otimes 2}=0.
\end{equation}
The rescaled QNSE satisfies 
\begin{align}
    & G\approx 9.01\times 10^{-1}<1,\notag\\
    & R\approx 6.27\times 10^{-1}<\sqrt{2}/2.
\end{align}

Then we solve Eq. (\ref{eq-0407-v1}) with our algorithm and have $\tilde{\bm{\omega}}$; then $\tilde{\bm{u}}=\frac{1}{\zeta}\tilde{\bm{\omega}}$. Equation (\ref{eq-0412-01}) is also solved with the classical algorithm, and the solution $\bm{u}^{*}$ is obtained; the error satisfies 
\begin{equation}
    \Vert \bm{u}^{*}-\bm{\tilde{u}} \Vert \approx 5.41\times 10^{-19}.
\end{equation}

\section{Conclusion and Discussion}\label{section-7}

In this paper, a quantum algorithm for solving QNSE was proposed. When focusing on the equation dimension $n$ and the solution error $\epsilon$, the complexity of our algorithm is $O(\rm{poly}[\log(n/\epsilon)])$. 
The process of solving QNSE with a classical computer is usually transformed into solving a system of linear equations iteratively \cite{rheinboldt1974methods}. The conjugate gradient method is a widely used linear system solver with complexity $O(ns\kappa)$ \cite{shewchuk1994introduction}, so compared with the classical algorithm, our algorithm provides an exponential improvement in dimension $n$. The dependence on $\epsilon$ of our algorithm complexity is $O(\rm{poly}(\log[1/\epsilon)])$, which is almost optimal.

In practice, with the outstanding performance in solving QNSE, our algorithm can be used as a subprogram to accelerate the process of computation in many nonlinear problems, such as quadratic programming \cite{mangasarian1994nonlinear} and quadratic nonlinear differential equations \cite{anderson1995computational,hobbie2007intermediate,ghil2012topics}.
Furthermore, our algorithm provides a different idea for solving a system of nonlinear equations with quantum computing, which could inspire more quantum algorithms for solving nonlinear equations.

Our algorithm considers only QNSE; it can also be generalized to a higher-order nonlinear system of equations. For example, an $m$-order nonlinear system of equations $\sum_{j=0}^{m}{F_j\bm{x}^{\otimes j}}=0$ can be transformed into a finite-dimensional system of linear equations $Ay=b$ through a process similar to that introduced in Sec. \ref{section-3}. The convergence condition, the expression of $Ay=b$, etc., depend on $F_j,\ j=0,1,\dots,m$. 

An open question is whether our algorithm can be optimized further. Notice that our algorithm has several restrictions on QNSE. As discussed in Sec. \ref{section-4}, the constraints of our algorithm are that the linear component $F_1$ is well conditioned and is dominant in QNSE, which limit the applications of our algorithm.
In the future we will consider optimizing the constraints of our algorithm, thereby expanding the applications of our algorithm.

\section*{Acknowledgments}
This work was supported by the National Natural Science Foundation of China (Grant No. 12034018), and the Innovation Program for Quantum Science and Technology No. 2021ZD0302300. 

\appendix

\section{Proof of Lemma \ref{error-threshold}}\label{sec-lemma-error}

\begin{proof}

The exact solution $\bm{x}^{*}$ can be written as
\begin{equation}
    \bm{x}^{*}=\sum_{i=0}^{\infty}{\nu_i}.
\end{equation}
Then
\begin{equation}
  \Vert \bm{x}^{*}-\tilde{\bm{x}} \Vert=\Vert \sum_{i=c+1}^{\infty}{\nu_i} \Vert.
\end{equation}
Next, we give an upper bound of $\Vert \nu_i \Vert$.
From Eq. (\ref{linear_differential}),
\begin{equation}
    \Vert \nu_0\Vert=\Vert F_1^{-1}F_0 \Vert \leq \alpha.
\end{equation}
We assume $\Vert \nu_i\Vert \leq \gamma_i \beta^i\alpha^{i+1}$, where $\gamma_i$ is a constant and $\gamma_0=1$. From Eq. (\ref{linear_differential}), $\gamma_i$ can be set as
\begin{equation}\label{catalan-list}
    \gamma_i=\sum_{j=0}^{i-1}{\gamma_j\gamma_{i-1-j}}, \gamma_0=1.
\end{equation}
Equation (\ref{catalan-list}) represents the Catalan series \cite{koshy2008catalan}, and $\gamma_i$ satisfies
\begin{equation}\label{eq-catalan}
    \gamma_i=\frac{1}{i+1} \binom{2i}{i}\approx\frac{4^i}{i^{3/2}\sqrt{\pi}}<4^i.
\end{equation}
Thus,
\begin{equation}
    \Vert \nu_i \Vert \leq \alpha (4\alpha\beta)^i\leq\alpha R^i.
\end{equation}
Considering $R<1$, we have
\begin{equation}
    \Vert \sum_{i=c+1}^{\infty}{\nu_i} \Vert\leq \sum_{i=c+1}^{\infty}{\Vert \nu_i \Vert}\leq \frac{\alpha R^{c+1}}{1-R}.
\end{equation}
Therefore, when $c\geq \log_{1/R}{\frac{\alpha}{\epsilon(1-R)}}$, $\tilde{\bm{x}}$ satisfies  $\Vert \bm{x}^{*}-\tilde{\bm{x}} \Vert\leq \epsilon$.

\end{proof}

\section{Proof of Lemma \ref{oracle-construction}}\label{sec-A-oracle-construction}

\begin{proof}
    (1) Construction process of $O_{A1}$ and $O_{A2}$.

    In fact, the construction process of $O_{A1}$ and $O_{A2}$ is the process to compute nonzero blocks of matrix $A$. We just need to compute the nonzero blocks of $A_{i,i}$ and $A_{i,i+1}$, $i\in \{0,1,\dots,c\}$.

    First, we regard $A_{i,i}$ as a $\beta_i$-dimensional diagonal block matrix; the $j$th block is written as $A_{(i,j),(i,j)}$. When $j=0$, $A_{(i,j),(i,j)}$ is the matrix defined in Eq. (\ref{eq-6}). When $j\in \{1,2,\dots,\beta_i-1\}$, we compute $\vec{a}_{i,j}$ defined in Eq. (\ref{eq-aij}) and find $k$, the label of $\vec{a}_{i,j}$'s first nonzero element. Then $A_{(i,j),(i,j)}=I_n^{\otimes k}\otimes F_1 \otimes I_n^{\otimes i-k}$. 

    Second, we regard $A_{i,i}$ as a ($\beta_i\times \beta_{i+1}$)-dimensional block matrix; one block is written as $A_{(i,j),(i+1,j^{'})}$. When $j=0$, $A_{(i,j),(i+1,j^{'})}=\bm{0}$ for $j^{'}\in \{0,1,\dots,\beta_{i+1}-1\}$. When $j\in \{1,2,\dots,\beta_i-1\}$, we compute $\vec{a}_{i,j}$ defined in Eq. (\ref{eq-aij}) and find $a_{i,j,k}$, the first nonzero element of $\vec{a}_{i,j}$. Then we compute the related $\vec{a}_{i+1,j^{'}}$ from Eq. (\ref{eq3}); the number of $\vec{a}_{i+1,j^{'}}$ is $a_{i,j,k}$. Next, we compute $j^{'}$ from $\vec{a}_{i+1,j^{'}}$; for these $j^{'}$, $A_{(i,j),(i+1,j^{'})}=I_n^{\otimes k}\otimes F_2 \otimes I_n^{\otimes i-k}$. 

    Therefore, for $i=0,1,\dots,c$, $j=0,1,\dots,\beta_i-1$, we can compute the nonzero $A_{(i,j),(i,j)}$ and $A_{(i,j),(i+1,j^{'})}$, the time complexity is $O({\rm poly}(c))$. The nonzero elements of $A_{(i,j),(i,j)}$ or $A_{(i,j),(i+1,j^{'})}$ can be extracted by querying $O_{F1}$ or $O_{F2}$. 

    By implementing the above computation process with a quantum circuit, $O_{A1}$ and $O_{A2}$ can be constructed directly. Notice that the expression of $A_{(i,j),(i,j)}$ or $A_{(i,j),(i+1,j^{'})}$ is related to $i$ and $k$, where $i,k\leq c$. $O_{F1}$ and $O_{F2}$ are required for different $i$ and $k$; therefore the query complexity of the whole process to $O_{F1}$ and $O_{F2}$ is $O({\rm poly}(c))$.

    (2)Construction process of $O_b$.

    Now we give the preparation process of $|b\rangle$ defined in Eq. (\ref{eq-ob}). From Eqs. (\ref{eq-ob}) and (\ref{eq-bij}), $|b\rangle$ can be simplified to
    \begin{equation}
        |b\rangle=\frac{1}{\Vert \bm{b}\Vert}\sum_{i=0}^{c}{\Vert F_0\Vert^{i+1}|i,0\rangle|b_{i,0}\rangle}.
    \end{equation}

    To prepare $|b\rangle$, we first prepare
    \begin{equation}
        |\psi\rangle=\frac{1}{M}\sum_{i=0}^{c}{\Vert F_0\Vert^{i+1}|i,0\rangle|0\rangle}.
    \end{equation}
    Notice that $|b_{i,0}\rangle$ can be prepared by querying $O_{F0}$ $i+1$ times; we define
    \begin{equation}
        U=\sum_{i=0}^{c}{|i,0\rangle\langle i,0|\otimes U_i},
    \end{equation}
    where $U_i|0\rangle=|b_{i,0}\rangle$. 
    Then we have
    \begin{equation}
        |b\rangle=U|\psi\rangle.
    \end{equation}
    The query complexity of this process to $O_{F0}$ is $O({\rm poly}(c))$.
\end{proof}

\section{Proof of Lemma \ref{A-condition}}\label{sec-A-condition}

We first give the following lemma.
\begin{lemma}\label{lemma-1}
    Given an $n$-dimensional invertible matrix $M$, $i\in \mathbb{N}^{+}$, matrix $P$ is defined as
    \begin{equation}
        P=\left(\begin{array}{ccccc}
            P_{0,0} &I & & & \\
            & P_{1,1} &I & & \\
            & & \ddots & \ddots & \\
            & & & P_{i-1,i-1} &I \\
            & & & & P_{i,i}
            \end{array}\right),
    \end{equation}
    where $P_{j,j}=I_n^{\otimes j}\otimes M \otimes I_n^{\otimes i-j}$.
    Then $P$ is invertible, and $P^{-1}$ satisfies
    \begin{equation}
        \Vert P^{-1}\Vert\leq \frac{\Vert M^{-1}\Vert (1-\Vert M^{-1} \Vert^{i+1})}{1-\Vert M^{-1} \Vert}.
    \end{equation}
\end{lemma}

\begin{proof}
    Let $Q=P^{-1}$ and consider $P$ and $Q$ to be $[(i+1)\times (i+1)]$-dimensional block matrices, $Q_{i,j}$ is written as
    \begin{equation}\label{eq-5}
        \begin{aligned}
            &Q_{j,j-k}=\bm{0}, &0\leq j\leq i, 0\leq j-k\leq i,\\
            &Q_{j,j}=P_{j,j}^{-1}, &0\leq j\leq i,\\
            &Q_{j,j+k}=-P_{j,j}^{-1}Q_{j+1,j+k},&0\leq j\leq i, 0\leq j+k\leq i.
        \end{aligned}
    \end{equation}
    From Eq. (\ref{eq-5}), $Q$ is an upper triangular block matrix. $Q$ is split into $Q=\sum_{k=0}^{i}{\Gamma_k}$, and $\Gamma_k$ contains the block $Q_{j,j+k}$. From Eq. (\ref{eq-5}) and the definition of $P_{j,j}$,
    \begin{equation}
        \Vert \Gamma_k \Vert = \Vert M^{-1} \Vert^{k+1}, k=0,1,\dots ,i.
    \end{equation}
    Therefore,
    \begin{equation}
        \Vert P^{-1}\Vert \leq \sum_{k=0}^{i}{\Vert \Gamma_k \Vert}\leq \frac{\Vert M^{-1}\Vert (1-\Vert M^{-1} \Vert^{i+1})}{1-\Vert M^{-1} \Vert}.
    \end{equation}
\end{proof}

Then the proof of Lemma \ref{A-condition} is as follows.

\begin{proof}
First, consider the upper bound of $\Vert A\Vert$. By the definition of $A_{i,i}$,
\begin{equation}\label{norm-aii}
    \Vert A_{i,i}\Vert \leq \Vert F_1\Vert+1,\quad i=0,1,\dots ,c.
\end{equation}
As $A_{0,1}A_{0,1}^T=\frac{c(c+1)}{2}F_2F_2^T$, $\Vert A_{0,1} \Vert $ satisfies
\begin{eqnarray}\label{norm-a01}
    \Vert A_{0,1} \Vert= \sqrt{\frac{c(c+1)}{2}}\Vert F_2 \Vert < (c+1)\Vert F_2 \Vert.
\end{eqnarray} 
When $i\geq 1$, consider $A_{i,i+1}$ to be $[(\beta_i+i)\times (\beta_{i+1}+i+1)]$-dimensional block matrix; each block has the form $I^{\otimes j}\otimes F_2\otimes I^{\otimes i-j}$, $ 0\leq j \leq i$. 
From the structure of $A_{i,i+1}$, it has no more than $c$ nonzero block matrices in each row or column, so $A_{i,i+1}$ can be split into at most $c$ matrices with at most one nonzero block matrix in each row or column, which leads to
\begin{equation}\label{norm-aii1}
    \Vert A_{i,i+1}\Vert \leq c\Vert F_2\Vert,\quad i\in[1,2,\dots ,c-1].
\end{equation}
Combining Eqs. (\ref{norm-aii}), (\ref{norm-a01}) and (\ref{norm-aii1}), we have
\begin{align}\label{norm-matA}
    \Vert A \Vert\leq& \Vert {\rm diag}(A_{0,0},A_{1,1},\dots ,A_{c,c}) \Vert\notag\\
    &\ + \Vert {\rm diag}(A_{0,1},A_{1,2},\dots ,A_{c-1,c}) \Vert \notag\\
    =&\max_{i=0}^{c}\{\Vert A_{i,i}\Vert \}+\max_{i=0}^{c-1}\{\Vert A_{i,i+1}\Vert \}\notag\\
    \leq& \Vert F_1\Vert+1+(c+1)\Vert F_2 \Vert.
\end{align}

Now we analyze the upper bound of $\Vert A^{-1} \Vert$. Let $B=A^{-1}$, and consider $A$ and $B$ to be $[(c+1)\times (c+1)]$-dimensional block matrices. The block $B_{i,j}$ satisfies
\begin{equation}\label{eq-17}
    \begin{aligned}
        &B_{i,i}=A_{i,i}^{-1},\quad 0\leq i\leq c,\\
        &B_{i,i+k}=-A_{i,i}^{-1}A_{i,i+1}B_{i+1,i+k},\\
        &\quad\quad 0\leq i\leq c-1,0\leq i+k\leq c. 
    \end{aligned}
\end{equation}
$B$ can be split by the distance from the diagonal,
\begin{equation}
    B=\sum_{k=0}^{c}{\Lambda_k}.
\end{equation}
$\Lambda_k$ contains the block $B_{i,i+k}$. $\Vert \Lambda_0 \Vert$ satisfies
\begin{equation}\label{eq-15}
    \Vert \Lambda_0 \Vert \leq \max_{0\leq i \leq c}{\Vert A_{i,i}^{-1}\Vert}.
\end{equation}
$A_{i,i}$ can be viewed as a block-diagonal matrix; the first block is the matrix described in Eq. (\ref{eq-6}), and other blocks are in the form $I_n^{\otimes j}\otimes F_1 \otimes I_n^{\otimes i-j}$. Therefore, from Lemma \ref{lemma-1},
\begin{equation}\label{eq-16}
    \Vert A_{i,i}^{-1} \Vert \leq \max\left\{\Vert F_1^{-1}\Vert,\frac{\Vert F_1^{-1}\Vert (1-\Vert F_1^{-1} \Vert^{i+1})}{1-\Vert F_1^{-1} \Vert} \right\}.
\end{equation}
Notice that $\Vert F_1^{-1}\Vert(1+(c+1)\Vert F_2\Vert)<1$ implies $\Vert F_1^{-1}\Vert<1$; we have
\begin{equation}
    \Vert \Lambda_0 \Vert \leq \frac{\Vert F_1^{-1}\Vert}{1-\Vert F_1^{-1} \Vert}.
\end{equation}
From Eqs. (\ref{norm-a01}) and (\ref{norm-aii1}), $\Vert A_{i,i+1} \Vert \leq (c+1)\Vert F_2 \Vert$. Therefore, 
\begin{equation}\label{eq-18}
    \Vert A_{i,i}^{-1}A_{i,i+1} \Vert \leq \frac{\Vert F_1^{-1}\Vert}{1-\Vert F_1^{-1} \Vert}(c+1)\Vert F_2 \Vert. 
\end{equation}
We define
\begin{equation}\label{eq-19}
    \zeta=\frac{\Vert F_1^{-1}\Vert}{1-\Vert F_1^{-1} \Vert}(c+1)\Vert F_2 \Vert.
\end{equation}
From Eqs. (\ref{eq-17}), (\ref{eq-18}), and (\ref{eq-19}),
\begin{equation}
    \Vert \Lambda_k \Vert \leq \zeta \Vert \Lambda_{k-1}\Vert,
\end{equation}
then
\begin{equation}
    \Vert B \Vert \leq \sum_{k=0}^{c}{\Vert \Lambda_k \Vert}\leq \frac{1-\zeta^{c+1}}{1-\zeta}\Vert \Lambda_0 \Vert.
\end{equation}
With the assumption $\zeta<1$, $\Vert A^{-1}\Vert$ satisfies
\begin{align}\label{norm-matA-inv}
    \Vert A^{-1} \Vert=\Vert B \Vert &\leq \frac{1}{1-\zeta}\times \frac{\Vert F_1^{-1}\Vert}{1-\Vert F_1^{-1} \Vert}\notag\\
    &= \frac{\Vert F_1^{-1} \Vert}{1-\Vert F_1^{-1} \Vert(1+(c+1)\Vert F_2 \Vert)}.
\end{align}
Therefore, from Eqs. (\ref{norm-matA}) and (\ref{norm-matA-inv}),
\begin{align}
    \kappa_A &=\Vert A\Vert\times \Vert A^{-1} \Vert\notag\\
    &\leq \frac{\kappa_{F_1} +\Vert F_1^{-1} \Vert[1+(c+1)\Vert F_2 \Vert]}{1-\Vert F_1^{-1} \Vert[1+(c+1)\Vert F_2 \Vert]}\notag\\
    &\leq \frac{\kappa_{F_1} +1}{1-\Vert F_1^{-1} \Vert[1+(c+1)\Vert F_2 \Vert]}.
\end{align}
\end{proof}

\section{Proof of Lemma \ref{success-theo}}\label{sec-lemma-success}

\begin{proof}
    First, reorder $\bm{y}$ as $\bm{y}=[\bm{y}_0^{'},\bm{y}_1^{'},\dots ,\bm{y}_c^{'},\bm{y}_{c+1}^{'}]$;
    $\bm{y}_i^{'}$ is divided into the following three cases:
    \begin{itemize}
      \item [(1)]When $i=0$, $\bm{y}_0^{'}=\bm{y}_0=[\sum_{j=0}^c{\nu_j}]$.
      \item [(2)]When $1\leq i\leq c$, the element of $\bm{y}_i^{'}$ is denoted as $\nu_{a_{i,0}^{'}}\otimes \nu_{a_{i,1}^{'}}\otimes \dots \otimes \nu_{a_{i,k}^{'}}$, which satisfies
      \begin{equation}\label{eq-success-2-1}
        \left\{
        \begin{aligned}
            &k\geq 1,\\
            &a_{i,j}^{'}\geq 0,\\
            &\sum_{j=0}^{k}{(a_{i,j}^{'}+1)}=i+1.
        \end{aligned}\right.   
      \end{equation}
      Therefore, the number of elements in $\bm{y}_i^{'}$ is $2^i-1$. Each item $\bm{y}_{i,j}^{'}$ in $\bm{y}_i^{'}$ satisfies
      \begin{equation}
          \Vert \bm{y}_{i,j}^{'} \Vert \leq \alpha^{k+1}R^{i-k}\leq \Vert F_0\Vert^{k+1}R^{i-k}\leq R^{i+1}.
      \end{equation}
      Then
      \begin{equation}\label{eq-20}
          \Vert \bm{y}_i^{'} \Vert^2 \leq (2^i-1)R^{2(i+1)}<(2R^2)^iR^2.
      \end{equation}
      Since $R<\sqrt{2}/2$,
      \begin{equation}\label{eq-8}
          \sum_{i=1}^{c}{\Vert \bm{y}_i^{'} \Vert^2}\leq \frac{2R^4}{1-2R^2}.
      \end{equation}
      \item [(3)]When $i=c+1$, $\bm{y}_{c+1}^{'}$ contains the elements generated in Eq. (\ref{eq-6-1}) except $\nu_0^{\otimes i+1}$ for $i=1,2,\dots,c$; the elements are in the form
      \begin{equation}\label{eq0802-1}
          F_0^{\otimes j}\otimes \nu_0^{\otimes i-j}, i=2,3,\dots,c+1, j=1,2,\dots ,i-1,
      \end{equation}
      which implies that $\bm{y}_{c+1}^{'}$ contains $c(c+1)/2$ elements. Each item in $\bm{y}_{c+1}^{'}$ satisfies $\Vert F_0^{\otimes j}\otimes \nu_0^{\otimes i-j} \Vert \leq \Vert F_0\Vert^{i}$. Then $\bm{y}_{c+1}^{'}$ satisfies
      \begin{align}\label{eq-9}
        \Vert \bm{y}_{c+1}^{'} \Vert^2 &\leq \Vert F_0\Vert^4\sum_{j=1}^{c}{j\Vert F_0\Vert^{2(j-1)}}\notag\\
        &< \frac{\Vert F_0\Vert^4}{(1-\Vert F_0\Vert^2)^2 }\notag\\
        &\leq \frac{R^4}{(1-R^2)^2 }.
      \end{align}
    \end{itemize}
    Therefore, with Eq. (\ref{eq-8}), Eq. (\ref{eq-9}), and $\Vert \bm{y}_0^{'} \Vert = \Vert \bm{y}_0 \Vert=\Vert \bm{y}_{0,0} \Vert = \eta^{'} R$, $p$ satisfies 
    \begin{align}
        p&=\frac{\Vert \bm{y}_0^{'} \Vert^2}{\Vert \bm{y}_0^{'} \Vert^2+\sum_{i=1}^{c}{\Vert \bm{y}_i^{'} \Vert^2}+\Vert \bm{y}_{c+1}^{'} \Vert^2}\notag\\
        &\geq \frac{(\eta^{'})^2 R^2}{(\eta^{'})^2 R^2+\frac{2R^4}{1-2R^2}+\frac{R^4}{(1-R^2)^2}}\notag\\
        &\geq \frac{(\eta^{'})^2}{(\eta^{'})^2+\frac{2R^2}{1-2R^2}+2}\notag\\
        &\geq \frac{(\eta^{'})^2(1-2R^2)}{(\eta^{'})^2(1-2R^2)+2}.
    \end{align}
\end{proof}

\section{Proof of Theorem \ref{main-theo}}\label{sec-main-proof}

To prove Theorem \ref{main-theo}, some lemmas in  \cite{berry2017quantum} are used, which are as follows.

\begin{lemma}\label{berry-lemma13}
    Let $\vert \psi\rangle$ and $\vert \varphi\rangle$ be two vectors such that $\Vert \vert \psi\rangle\Vert \geq \alpha>0$ and $\Vert \vert \psi\rangle-\vert \varphi\rangle\Vert \leq \beta$. Then 
    \begin{equation}
        \left\Vert \frac{\vert \psi\rangle}{\Vert \vert \psi\rangle \Vert}-\frac{\vert \varphi\rangle}{\Vert \vert \varphi\rangle \Vert} \right\Vert \leq \frac{2\beta}{\alpha}.
    \end{equation}
\end{lemma}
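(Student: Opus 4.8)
The plan is to control the difference of the two normalized vectors by inserting a common intermediate vector and then splitting the estimate, via the triangle inequality, into a ``numerator'' error governed by $\beta$ and a ``denominator'' error governed by the mismatch of the two norms. Concretely, I would begin from the algebraic identity obtained by adding and subtracting $\vert\varphi\rangle/\Vert\vert\psi\rangle\Vert$,
\[
\frac{\vert \psi\rangle}{\Vert \vert\psi\rangle\Vert} - \frac{\vert \varphi\rangle}{\Vert \vert\varphi\rangle\Vert}
= \frac{\vert\psi\rangle - \vert\varphi\rangle}{\Vert\vert\psi\rangle\Vert}
+ \vert\varphi\rangle\left(\frac{1}{\Vert\vert\psi\rangle\Vert} - \frac{1}{\Vert\vert\varphi\rangle\Vert}\right),
\]
so that the left-hand side is bounded by the sum of the two norms on the right.

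For the first term, the two hypotheses combine immediately: since $\Vert\vert\psi\rangle-\vert\varphi\rangle\Vert\leq\beta$ and $\Vert\vert\psi\rangle\Vert\geq\alpha$, its norm is at most $\beta/\alpha$. For the second term, I would first simplify the scalar factor,
\[
\Vert\vert\varphi\rangle\Vert \left|\frac{1}{\Vert\vert\psi\rangle\Vert} - \frac{1}{\Vert\vert\varphi\rangle\Vert}\right|
= \frac{\left|\Vert\vert\varphi\rangle\Vert - \Vert\vert\psi\rangle\Vert\right|}{\Vert\vert\psi\rangle\Vert},
\]
and then invoke the reverse triangle inequality $\left|\Vert\vert\varphi\rangle\Vert - \Vert\vert\psi\rangle\Vert\right|\leq\Vert\vert\psi\rangle-\vert\varphi\rangle\Vert\leq\beta$, together with $\Vert\vert\psi\rangle\Vert\geq\alpha$, to bound this second contribution again by $\beta/\alpha$. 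Summing the two pieces yields the claimed $2\beta/\alpha$.

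The only step that requires genuine attention is the second term: one must recognize that, after cancelling the $\Vert\vert\varphi\rangle\Vert$ factor, the expression collapses to $\left|\Vert\vert\varphi\rangle\Vert - \Vert\vert\psi\rangle\Vert\right|/\Vert\vert\psi\rangle\Vert$, and that the reverse triangle inequality is precisely the device that converts the norm-of-the-difference hypothesis into the needed bound on the difference-of-norms. (Implicitly $\vert\varphi\rangle\neq 0$ so that its normalization is defined; this causes no trouble, since if $\beta\geq\alpha$ the inequality is trivial, the left side never exceeding $2$.) Everything else is routine triangle-inequality bookkeeping, so I do not anticipate any substantive analytic obstacle.
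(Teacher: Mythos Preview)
Your argument is correct and is exactly the standard proof of this elementary estimate: split via the intermediate term $\vert\varphi\rangle/\Vert\vert\psi\rangle\Vert$, bound the first piece by $\beta/\alpha$ directly, and the second by $\beta/\alpha$ via the reverse triangle inequality. The paper does not actually supply its own proof of this lemma; it merely quotes the statement from \cite{berry2017quantum}, so there is nothing to compare against, but what you have written is the canonical justification and would be fully acceptable.
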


\begin{lemma}\label{berry-lemma14}
    Let $\vert \psi\rangle=\alpha\vert 0\rangle\vert \psi_0\rangle+\sqrt{1-\alpha^2}\vert 1\rangle\vert \psi_1\rangle$ and $\vert \varphi\rangle=\beta\vert 0\rangle\vert \varphi_0\rangle+\sqrt{1-\beta^2}\vert 1\rangle\vert \varphi_1\rangle$, where $\vert \psi_0\rangle$, $\vert \psi_1\rangle$, $\vert \varphi_0\rangle$, and $\vert \varphi_1\rangle$ are unit vectors and $\alpha,\beta\in [0,1]$. Suppose $\Vert \vert \psi\rangle-\vert \varphi\rangle \Vert \leq \delta<\alpha$. Then $\Vert \vert \psi_0\rangle-\vert \varphi_0\rangle \Vert \leq \frac{2\delta}{\alpha-\delta}$.
\end{lemma}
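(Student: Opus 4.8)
The plan is to reduce the statement to the contraction property of an orthogonal projector followed by two bookkeeping applications of the triangle inequality. First I would introduce the projector $P_0=(\vert 0\rangle\langle 0\vert)\otimes I$ onto the subspace in which the first register equals $\vert 0\rangle$. Applying it to the two hypothesised decompositions annihilates the $\vert 1\rangle$ blocks and leaves $P_0\vert\psi\rangle=\alpha\vert 0\rangle\vert\psi_0\rangle$ and $P_0\vert\varphi\rangle=\beta\vert 0\rangle\vert\varphi_0\rangle$. Since an orthogonal projector has operator norm at most one it is a contraction, so
\begin{equation}
\Vert\alpha\vert\psi_0\rangle-\beta\vert\varphi_0\rangle\Vert=\Vert P_0(\vert\psi\rangle-\vert\varphi\rangle)\Vert\leq\Vert\vert\psi\rangle-\vert\varphi\rangle\Vert\leq\delta,
\end{equation}
the $\vert 0\rangle$ factor being dropped because it is a unit vector.

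Next I would extract control over $\vert\alpha-\beta\vert$. Because $\vert\psi_0\rangle,\vert\varphi_0\rangle$ are unit vectors and $\alpha,\beta\geq 0$, the reverse triangle inequality gives $\vert\alpha-\beta\vert\leq\Vert\alpha\vert\psi_0\rangle-\beta\vert\varphi_0\rangle\Vert\leq\delta$. This is the single place where the hypothesis $\delta<\alpha$ enters: it forces $\beta\geq\alpha-\delta>0$, so that division by $\beta$ is legitimate in the final step.

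Finally I would isolate the target difference. Writing $\beta(\vert\psi_0\rangle-\vert\varphi_0\rangle)=(\beta-\alpha)\vert\psi_0\rangle+(\alpha\vert\psi_0\rangle-\beta\vert\varphi_0\rangle)$ and applying the triangle inequality together with $\Vert\vert\psi_0\rangle\Vert=1$ yields
\begin{equation}
\beta\,\Vert\vert\psi_0\rangle-\vert\varphi_0\rangle\Vert\leq\vert\beta-\alpha\vert+\Vert\alpha\vert\psi_0\rangle-\beta\vert\varphi_0\rangle\Vert\leq 2\delta,
\end{equation}
and dividing by $\beta\geq\alpha-\delta$ produces the claimed bound $\Vert\vert\psi_0\rangle-\vert\varphi_0\rangle\Vert\leq 2\delta/(\alpha-\delta)$.

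The argument is elementary, so rather than a genuine obstacle the step requiring the most care is the choice of which amplitude to normalise against at the end. Decomposing and dividing by $\beta$ is exactly what produces the stated denominator $\alpha-\delta$; had I instead written $\alpha(\vert\psi_0\rangle-\vert\varphi_0\rangle)=(\alpha-\beta)\vert\varphi_0\rangle+(\alpha\vert\psi_0\rangle-\beta\vert\varphi_0\rangle)$ and divided by $\alpha$ I would obtain the marginally sharper $2\delta/\alpha$, which of course also implies the claim. I would therefore present the $\beta$-normalised version to match the statement exactly, relying on the positivity $\beta>0$ established above to guarantee the division is well defined.
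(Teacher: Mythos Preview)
Your argument is correct. The projection step, the reverse triangle inequality for $|\alpha-\beta|$, and the final normalisation by $\beta\geq\alpha-\delta$ are all sound, and the observation that dividing by $\alpha$ instead would give the slightly stronger $2\delta/\alpha$ is accurate.

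Regarding comparison with the paper: the paper does not actually prove this lemma. It is quoted in the appendix as a result imported from \cite{berry2017quantum}, so there is no in-paper proof to compare against. Your elementary projector-plus-triangle-inequality argument is the standard one and is exactly what one would expect to find in the original reference.
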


Then the proof of Theorem \ref{main-theo} is shown as follows.

\begin{proof}
    Construct the system of linear equations $A\bm{y}=\bm{b}$ defined in Eq. (\ref{eq-linear}), and define
    \begin{equation}\label{eq-27}
        \tilde{\bm{x}}=\bm{y}_{0}=\sum_{i=0}^{c}{\nu_i}.
    \end{equation}
    With Lemma \ref{error-threshold} and our choice of $c$,
    \begin{equation}\label{eq-20211122-01}
        \Vert \bm{x}^{*}-\tilde{\bm{x}} \Vert \leq \frac{\epsilon\eta R}{4}.
    \end{equation}
    With $\Vert \bm{x}^{*}\Vert= \eta R $, Eq. (\ref{eq-27}), and Lemma \ref{berry-lemma13},
    \begin{equation}\label{eq-main-01}
        \Vert \frac{\bm{x}^{*}}{\Vert \bm{x}^{*}\Vert}-\frac{\tilde{\bm{x}}}{\Vert \tilde{\bm{x}} \Vert} \Vert \leq \epsilon/2.
    \end{equation}
    The normalized state of $\bm{y}$ is written as
    \begin{equation}\label{eq-21}
        \vert \bar{\bm{y}}\rangle=\sum_{l=0}^{c}{\alpha_l\vert l\rangle\vert \bar{\bm{y}}_l\rangle}.
    \end{equation}
    From Eqs. (\ref{eq-27}) and (\ref{eq-21}),
    \begin{equation}\label{eq-277}
        \bar{\bm{y}}_0=\frac{\tilde{\bm{x}}}{\Vert \tilde{\bm{x}} \Vert}.
    \end{equation}
    Then $A\bm{y}=\bm{b}$ is solved with the quantum linear system solver proposed in  \cite{childs2017quantum}, and the output state is written as
    \begin{equation}\label{eq-22}
        \vert \bm{y}^{'}\rangle=\sum_{l=0}^{c}{\alpha_l^{'}\vert l\rangle\vert y_l^{'}\rangle}.
    \end{equation}
    Define $\eta^{'}=\Vert \tilde{\bm{x}}\Vert/R$; from Eq. (\ref{eq-20211122-01}), $\eta^{'}$ satisfies
    \begin{equation}\label{eq-100}
        \vert \eta-\eta^{'}\vert \leq \frac{\epsilon\eta}{4}.
    \end{equation}
    The solution error of the quantum linear system solver is set as
    \begin{equation}\label{delta-def}
        \delta=\frac{\eta^{'}\sqrt{5(1-2R^2)}}{30}\epsilon,
    \end{equation}
    From Theorem $5$ in  \cite{childs2017quantum},
    \begin{equation}\label{eq-23}
        \Vert \vert \bar{\bm{y}}\rangle-\vert \bm{y}^{'}\rangle\Vert \leq \delta.
    \end{equation}
    We define $|\bm{\bar{x}}\rangle$ as
    \begin{equation}\label{eq-225}
        |\bm{\bar{x}}\rangle:=|y_0^{'}\rangle.
    \end{equation}
    From Eqs. (\ref{eq-21}), (\ref{eq-22}), (\ref{eq-23}), and (\ref{eq-225}) and Lemma \ref{berry-lemma14}, $|\bm{\bar{x}}\rangle$ satisfies
    \begin{equation}\label{eq-main-02}
        \Vert \vert \bar{y}_0\rangle-|\bm{\bar{x}}\rangle \Vert\leq \frac{2\delta}{\alpha_0-\delta}.
    \end{equation}
    From Lemma \ref{success-theo},
    \begin{equation}\label{eq-24}
        \alpha_0\geq \sqrt{\frac{(\eta^{'})^2(1-2R^2)}{(\eta^{'})^2(1-2R^2)+4}}> \sqrt{\frac{(\eta^{'})^2(1-2R^2)}{5}}.
    \end{equation}
    The second inequality in the above equation defaults to $(\eta^{'})^2(1-2R^2)<1$. Combining Eqs. (\ref{delta-def}), (\ref{eq-main-02}), and (\ref{eq-24}), we have
    \begin{equation}\label{eq-11}
        \Vert \vert \bar{y}_0\rangle-|\bm{\bar{x}}\rangle \Vert\leq \epsilon/2.
    \end{equation}
    [Notice that when $(\eta^{'})^2(1-2R^2)\geq 1$, $\alpha_0\geq 1/\sqrt{5}$; then $\delta$ is set as $\delta=C\epsilon$, where $C<\sqrt{5}/20$, and Eq. (\ref{eq-11}) can also be derived.]
    
    From Eqs. (\ref{eq-27}), (\ref{eq-main-01}), (\ref{eq-277}), and (\ref{eq-11}), the solution error satisfies
    \begin{equation}
        \left\Vert \frac{\vert \bm{x}^{*}\rangle}{\Vert\bm{x}^{*}\Vert}-|\bm{\bar{x}}\rangle \right\Vert \leq \epsilon.
    \end{equation}

    The solution error influences the success probability of our algorithm. We default to $\epsilon<0.1$; then from Es. (\ref{delta-def}) and (\ref{eq-24}), $\delta$ satisfies
    \begin{equation}
        \delta\leq \frac{\epsilon}{6}\alpha_0<\frac{1}{60}\alpha_0.
    \end{equation}
    Then
    \begin{equation}
        \alpha_0^{'}\geq \alpha_0-\delta\geq \frac{59}{60}\alpha_0.
    \end{equation}
    With Eq. (\ref{eq-100}) and $\eta^{'}\geq (1-\epsilon/4)\eta$,
    \begin{align}\label{eq-25}
        p=&(\alpha_0^{'})^2\geq \left(\frac{59}{60}\right)^2 \frac{1}{5} (1-\epsilon/4)^2\eta^2(1-2R^2) \notag\\
        >& 0.18\eta^2(1-2R^2).
    \end{align}
 
    Next, we analyze our algorithm complexity. From Lemma \ref{oracle-construction}, $O_{A}$ defined in Eq. (\ref{eq-oa}) can be constructed by querying $O_{F1}$ and $O_{F2}$ $O(\rm{poly}(c))$ times; $O_b$ defined in Eq. (\ref{eq-ob}) can be constructed by querying $O_{F0}$ $O(\rm{poly}(c))$ times. From Eq. (\ref{dim-N}), the dimension of the matrix $A$ is $N\approx (n+1)^{c+1}+cn^{c+1}$.
    The sparsity of $A$ is $s_A =\frac{c(c+1)}{2}s$.
    From Lemma \ref{A-condition}, $\kappa_A \leq\frac{\kappa_{F_1} +1}{1-G}$.
    From Theorem $5$ in  \cite{childs2017quantum}, when solving the linear system $A\bm{y}=\bm{b}$, the query complexity of $O_A$ and $O_b$ is
    \begin{equation}\label{eq-12}
        O(s_A \kappa_A {\rm polylog}(s_A \kappa_A /\delta)).
    \end{equation}
    Substituting the relevant parameters into Eq. (\ref{eq-12}), the query complexity of $O_{F0}$, $O_{F1}$, and $O_{F2}$ is
    \begin{equation}
        O\left(\frac{\kappa_{F_1}  s}{1-G} {\rm polylog}\left(\frac{\Vert F_1\Vert}{\epsilon\eta (1-G)(1-2R^2)\Vert F_2 \Vert}\right) \right).
    \end{equation}
    From Eq. (\ref{eq-25}), the success probability of the algorithm is $O(\eta^2(1-2R^2))$. Using amplitude amplification \cite{brassard2002quantum}, we repeat the above procedure $O(\frac{1}{\eta\sqrt{1-2R^2}})$ times and obtain the state $\vert \bm{\bar{x}}\rangle$ with probability $\Omega(1)$. Therefore, the query complexity of our algorithm is 
    \begin{equation}
        O\left(\frac{\kappa_{F_1}  s\cdot{\rm polylog}\left(\frac{\Vert F_1\Vert}{\epsilon\eta (1-G)(1-2R^2)\Vert F_2 \Vert}\right)}{\eta(1-G)\sqrt{1-2R^2}}  \right),
    \end{equation}
    and the gate complexity is the query complexity multiplied by a factor of ${\rm polylog}\left(\frac{n\Vert F_1\Vert}{\epsilon\eta (1-G)(1-2R^2)\Vert F_2 \Vert}\right)$.
\end{proof}

\bibliography{qhpmNew}

\end{document}